\newtheorem{theorem}{Theorem}
\newtheorem{definition}[theorem]{Definition}
\newtheorem{lemma}[theorem]{Lemma}
\newtheorem{corollary}[theorem]{Corollary}
\newtheorem{example}[theorem]{Example}
\newtheorem{problem}[theorem]{Problem}
\newcommand{\sus}{\mathit{SUS}}
\newcommand{\suss}{\mathit{SUS}\mbox{s}}
\newcommand{\rev}[1]{{#1}^{\mathit{R}}}
\newcommand{\occ}{\mathit{occ}}
\newcommand{\sups}{\mathit{SUPS}}
\newcommand{\supss}{\mathit{SUPS}\mbox{s}}
\newcommand{\mups}{\mathit{MUPS}}
\newcommand{\mupss}{\mathit{MUPS}\mbox{s}}
\newcommand{\M}{\mathcal{M}}
\newcommand{\m}{\mathit{mups}}
\newcommand{\mupslen}{\mathit{Mlen}}
\newcommand{\predmups}{\mathit{predMUPS}}
\newcommand{\succmups}{\mathit{succMUPS}}
\newcommand{\predend}{\mathit{predMUPS}}
\newcommand{\succbeg}{\mathit{succMUPS}}
\newcommand{\distinctpal}{\mathit{DP}}
\newcommand{\rmq}{\mathit{RmQ}}
\newcommand{\ISA}{\mathit{ISA}}
\newcommand{\LCPA}{\mathit{LCP}}
\newcommand{\LPF}{\mathit{LPF}}
\title{
  Shortest unique palindromic substring queries in optimal time
}
\author{
  Yuto~Nakashima$^{1, 2}$\quad
  Hiroe~Inoue$^1$\quad
  Takuya~Mieno$^1$\quad\\
  Shunsuke~Inenaga$^1$\quad
  Hideo~Bannai$^1$\quad
  Masayuki~Takeda$^1$\\
  {$^1$ Department of Informatics, Kyushu University}\\
  {$^2$ Japan Society for the Promotion of Science (JSPS), Japan}\\
  {\texttt{\{yuto.nakashima,hiroe.inoue,takuya.mieno,}}\\
  {\texttt{inenaga,bannai,takeda\}@inf.kyushu-u.ac.jp}}
}
\date{}
\begin{document}

\maketitle

\begin{abstract}
A palindrome is a string that reads the same forward and backward.
A palindromic substring $P$ of a string $S$ 
is called a shortest unique palindromic substring ($\sups$) for an interval $[s, t]$ in $S$,
if $P$ occurs exactly once in $S$, this occurrence of $P$ contains interval $[s, t]$,
and every palindromic substring of $S$ which contains interval $[s, t]$ and is shorter than $P$ occurs at least twice in $S$.
The $\sups$ problem is, given a string $S$, to preprocess $S$ so that for any subsequent query interval $[s, t]$ 
all the $\supss$ for interval $[s, t]$ can be answered quickly.
We present an optimal solution to this problem.
Namely, we show how to preprocess a given string $S$ of length $n$ in $O(n)$ time and space so that all $\supss$
for any subsequent query interval can be answered in $O(\alpha+1)$ time, where $\alpha$ is the number of outputs.
\end{abstract}

\section{Introduction}
A substring $S[i..j]$ of a string $S$ is called a \emph{shortest unique substring} ($\sus$) for a position $p$
if $S[i..j]$ is the shortest substring
s.t. $S[i..j]$ is unique in $S$ (i.e., $S[i..j]$ occurs exactly once in $S$),
and $[i..j]$ contains $p$ (i.e., $i \leq p \leq j$).
Recently, Pei et al.~\cite{Pei} proposed the \emph{point SUS problem},
preprocessing a given string $S$ of length $n$ so that we can return a $\sus$ for any given query position efficiently.
This problem was considered for some applications in bioinformatics,
e.g., polymerase chain reaction (PCR) primer design in molecular biology.
Pei et al.~\cite{Pei} proposed an algorithm which returns a $\sus$ for any given position
in constant time after $O(n^2)$-time preprocessing.
After that, Tsuruta et al.~\cite{Tsuruta} and Ileri et al.~\cite{ileri14:_short_unique_subst_query_revis}
independently showed optimal $O(n)$-time preprocessing and constant query time algorithms.
They also showed optimal $O(n)$-time preprocessing and $O(k)$ query time algorithms
which return all $\sus\mbox{s}$ for any given position
where $k$ is the number of outputs.
Moreover, Hon et al.~\cite{HonTX15} proposed an in-place algorithm which returns a $\sus$.
A more general problem called \emph{interval SUS problem},
where a query is an interval, was considered by Hu et al.~\cite{Hu}.
They proposed an optimal $O(n)$-time preprocessing and $O(k)$ query time algorithm
which returns all $\suss$ containing a given query interval.
Most recently, Mieno et al.~\cite{MienoIBT16} proposed an efficient algorithm
for interval $\sus$ problem when the input string is represented by \emph{run-length encoding}.

In this paper, we consider a new variant of interval $\sus$ problems concerning palindromes.
A substring $S[i..j]$ is called a palindromic substring of $S$
if $S[i..j]$ and the reversed string of $S[i..j]$ is the same string.
The study of combinatorial properties and structures on palindromes is still an important and
well studied topic in stringology~\cite{DBLP:conf/dlt/BannaiGIKKPPS15,DBLP:journals/tcs/DroubayJP01,DBLP:journals/jda/FiciGKK14,DBLP:journals/ipl/GroultPR10,DBLP:conf/cpm/ISIBT14,eertree}.
Droubay et al.~\cite{DBLP:journals/tcs/DroubayJP01} showed a string of length $n$ can contain 
at most $n+1$ distinct palindromes.
Moreover, Groult et al.~\cite{DBLP:journals/ipl/GroultPR10} proposed a linear time algorithm
for computing all distinct palindromes in a string.

Our new problem can be described as follows.
A substring $S[i..j]$ of a string $S$ is called a \emph{shortest unique palindromic substring} ($\sups$) for an interval $[s, t]$
if $S[i..j]$ is the shortest substring
s.t. $S[i..j]$ is unique in $S$, $[i..j]$ contains $[s, t]$, and $S[i..j]$ is a palindromic substring.
The \emph{interval SUPS problem} is to preprocess a given string $S$ of length $n$
so that we can return all $\supss$ for any query interval efficiently.
For this problem, we propose an optimal $O(n)$-time preprocessing and $O(\alpha+1)$-time query algorithm,
where $\alpha$ is the number of outputs.
Potential applications of our algorithm are in bioinformatics;
It is known that the presence of particular (e.g., unique) palindromic sequences can affect
immunostimulatory activities of oligonucleotides~\cite{kuramoto1992oligonucleotide,yamamoto1992unique}.
The size and the number of palindromes also influence the activity.
Since any unique palindromic sequence can be obtained easily from a shorter unique palindromic sequences,
we can focus on the shortest unique palindromic substrings.

The contents of our paper are as follows.
In Section~\ref{sec:preliminaries}, we state some definitions and properties on strings.
In Section~\ref{sec:sups}, we explain properties on $\sups$ and our query algorithm.
In Section~\ref{sec:mups}, we show the main part of the preprocessing phase of our algorithm.
Finally, we conclude.

\section{Preliminaries} \label{sec:preliminaries}

\subsection{Strings}
Let $\Sigma$ be an integer {\em alphabet}.
An element of $\Sigma^*$ is called a {\em string}.
The length of a string $S$ is denoted by $|S|$. 
The empty string $\varepsilon$ is a string of length 0,
namely, $|\varepsilon| = 0$.
Let $\Sigma^+$ be the set of non-empty strings,
i.e., $\Sigma^+ = \Sigma^* - \{\varepsilon\}$.
For a string $S = xyz$, $x$, $y$ and $z$ are called
a \emph{prefix}, \emph{substring}, and \emph{suffix} of $S$, respectively.
A prefix $x$ and a suffix $z$ of $S$ are respectively 
called a \emph{proper prefix} and \emph{proper suffix} of $S$, 
if $x \neq S$ and $z \neq S$.
The $i$-th character of a string $S$ is denoted by $S[i]$, where $1 \leq i \leq |S|$.
For a string $S$ and two integers $1 \leq i \leq j \leq |S|$, 
let $S[i..j]$ denote the substring of $S$ that begins at position $i$ and ends at position $j$.
For convenience, let $S[i..j] = \varepsilon$ when $i > j$.

\subsection{Palindromes}
Let $\rev{S}$ denote the reversed string of $S$,
that is, $\rev{S} = S[|S|] \cdots S[1]$.
A string $S$ is called a palindrome if $S = \rev{S}$.
Let $P \subset \Sigma^*$ be the set of palindromes.
A substring $S[i..j]$ of $S$ is said to be a palindromic substring of $S$, if $S[i..j] \in P$.
The center of a palindromic substring $S[i..j]$ of $S$ is $\frac{i+j}{2}$.
Thus a string $S$ of length $n \geq 1$ has $2n-1$ centers
($1, 1.5, \ldots, n-0.5, n$).
The following lemma can be easily obtained by the definition of palindromes.

\begin{lemma} \label{lem:rev_in_pal}
Let $S$ be a palindrome.
For any integers $i, j$ s.t. $1 \leq i \leq j \leq |S|$,
$S[|S|-j+1..|S|-i+1] = \rev{S[i..j]}$ holds.
\end{lemma}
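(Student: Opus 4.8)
The plan is to derive the statement from the elementary pointwise characterization of palindromes: a string $S$ is a palindrome if and only if $S[k] = S[|S|-k+1]$ for every $k$ with $1 \leq k \leq |S|$. First I would prove this characterization. By definition $S$ is a palindrome iff $S = \rev{S}$, and by the definition of reversal the $k$-th character of $\rev{S}$ equals $S[|S|-k+1]$; since $S$ and $\rev{S}$ have the same length $|S|$, comparing them position by position yields $S[k] = S[|S|-k+1]$ for all such $k$ (and conversely, although only this direction is needed here).

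Next, fix $i, j$ with $1 \leq i \leq j \leq |S|$ and write $n = |S|$. Both sides of the claimed identity are strings of length $j-i+1$: the left-hand side is $S[n-j+1..n-i+1] = S[n-j+1]\,S[n-j+2]\cdots S[n-i+1]$, and the right-hand side is $\rev{S[i..j]} = S[j]\,S[j-1]\cdots S[i]$. It therefore suffices to show they agree in every position. For an offset $\ell$ with $0 \leq \ell \leq j-i$, the $(\ell+1)$-th character of the left-hand side is $S[n-j+1+\ell]$; since $1 \leq i \leq j \leq n$ guarantees $1 \leq n-j+1+\ell \leq n$, the palindrome characterization applies and gives
\[
S[n-j+1+\ell] = S[\,n-(n-j+1+\ell)+1\,] = S[j-\ell],
\]
which is exactly the $(\ell+1)$-th character of the right-hand side. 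Hence the two strings coincide.

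The argument is routine and presents no real obstacle; the only thing to watch is the index bookkeeping---checking that $n-j+1+\ell$ stays in the valid range $\{1,\ldots,n\}$ as $\ell$ ranges over $\{0,\ldots,j-i\}$, and that the reversal is read off in the correct direction---so I would keep the write-up to the short character-wise computation above.
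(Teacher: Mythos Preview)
Your proof is correct. The paper does not actually give a proof of this lemma at all; it simply states that the lemma ``can be easily obtained by the definition of palindromes'' and moves on. Your character-by-character verification via the identity $S[k]=S[|S|-k+1]$ is precisely the routine derivation the paper is gesturing at, so there is nothing to compare.
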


\subsection{$\mupss, \supss$ and our problem}
For any non-empty strings $S$ and $w$,
let $\occ_S(w)$ denote the set of occurrences of $w$ in $S$,
namely, $\occ_S(w) = \{i \mid 1 \leq i \leq |S|-|w|+1, w = S[i..i+|w|-1]\}$.
A substring $w$ of a string $S$ is called a \emph{unique substring} 
(resp. a \emph{repeat}) of $S$
if $|\occ_S(w)| = 1$ (resp. $|\occ_S(w)| \geq 2$).
In the sequel, we will identify each unique substring $w$ of $S$
with its corresponding (unique) interval 
$[i, j]$ in $S$ such that $w = S[i..j]$.
A substring $S[i..j]$ is said to be \emph{unique palindromic substring}
if $S[i..j]$ is a unique substring in $S$ and a palindromic substring.
We will say that an interval $[i_1, j_1]$ contains an interval $[i_2, j_2]$
if $i_1 \leq i_2 \leq j_2 \leq j_1$ holds.
The following notation is useful in our algorithm.

\begin{definition}[Minimal Unique Palindromic Substring($\mups$)]
A string $S[i..j]$ is a $\mups$ in $S$
if $S[i..j]$ satisfies all the following conditions;
\begin{itemize}
\item $S[i..j]$ is a unique palindromic substring in $S$,
\item $S[i+1..j-1]$ is a repeat in $S$ or $1 \leq |S[i..j]| \leq 2$.
\end{itemize}
\end{definition}

Let $\M_S$ denote the set of intervals of all $\mupss$ in $S$
and let $\m_i = [b_i, e_i]$ denote the $i$-th $\mups$ in $\M_S$
where $1 \leq i \leq m$ and $m$ is the number of $\mupss$ in $S$.
We assume that $\mupss$ in $\M_S$ are sorted in increasing order of beginning positions.
For convenience, we define $\m_0 = [-1, -1], \m_{m+1} = [n+1, n+1]$.
\begin{example}[$\mups$]
For $S = \mathtt{acbaaabcbcbcbaab}$, $\M_S = \{ [4, 6], [8, 12], [13, 16] \}$
(see also Fig.~\ref{fig:example}).
\end{example}

\begin{definition}[Shortest Unique Palindromic Substring($\sups$)]
A string $S[i..j]$ is a $\sups$ for an interval $[s, t]$ in $S$
if $S[i..j]$ satisfies all the following conditions;
\begin{itemize}
\item $S[i..j]$ is a unique palindromic substring in $S$,
\item $[i, j]$ contains $[s, t]$,
\item no unique palindromic substring $S[i'..j']$ containing $[s, t]$ 
	  with $j'-i' < j-i$ exists.
\end{itemize}
\end{definition}

\begin{example}[$\sups$]\label{exp:sups}
Let $S = \mathtt{acbaaabcbcbcbaab}$.
$\sups$ for interval $[6, 7]$ is the $S[3..7] = \mathtt{baaab}$.
$\sups$ for interval $[7, 8]$ are $S[2..8] = \mathtt{cbaaabc}$ and $S[7..13] = \mathtt{bcbcbcb}$.
$\sups$ for interval $[4, 13]$ does not exist.
(see also Fig.~\ref{fig:example}).
\end{example}

\begin{figure}[tbh]
  \centerline{
    \includegraphics[width = 1.0\textwidth]{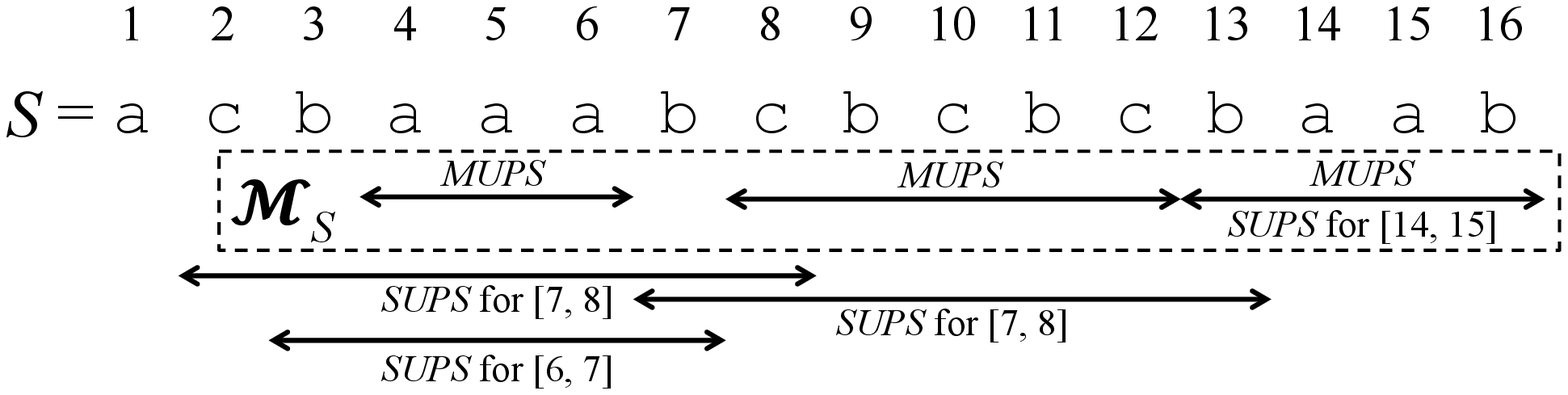}
  }
  \caption{
  	This figure shows all $\mupss$ for $S = \mathtt{acbaaabcbcbcbaab}$ 
  	and some $\sups$ described in Example~\ref{exp:sups}.
  }
  \label{fig:example}
\end{figure}

In this paper, we tackle the following problem.

\begin{problem}[$\sups$ problem]
\leavevmode \par
\begin{itemize}
\item {\bf Preprocess} : String $S$ of length $n$.
\item {\bf Query} : An interval $[s, t] (1 \leq s \leq t \leq n)$.
\item {\bf Return} : All the $\supss$ for interval $[s, t]$.
\end{itemize}
\end{problem}

\subsection{Computation Model}
Our model of computation is the word RAM:
We shall assume that the computer word size is at least $\lceil \log_2 n \rceil$, 
and hence, standard operations on
values representing lengths and positions of strings 
can be manipulated in constant time.
Space complexities will be determined by the number of computer words (not bits).

\section{Solution to the $\sups$ problem} \label{sec:sups}
In this section, we show how to compute all $\supss$ for any query interval $[s, t]$.

\subsection{Properties on $\sups$ and $\mups$} \label{subsec:property}
In our algorithm, we compute $\supss$ by using $\mupss$.
Firstly, we show the following lemma.
Lemma~\ref{lem:non_nesting} states that $\mupss$ cannot nest in each other.

\begin{lemma} \label{lem:non_nesting}
For any pair of distinct $\mupss$, one cannot contain the other.
\end{lemma}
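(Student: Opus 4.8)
The plan is to argue by contradiction using the defining property of a $\mups$: if $S[i..j]$ is a $\mups$, then either it has length at most $2$, or its ``core'' $S[i+1..j-1]$ is a repeat in $S$. Suppose $\m = [b, e]$ and $\m' = [b', e']$ are distinct $\mupss$ with $[b', e']$ contained in $[b, e]$, i.e.\ $b \leq b' \leq e' \leq e$, and (at least) one of the two inequalities $b \leq b'$, $e' \leq e$ strict. Both $S[b..e]$ and $S[b'..e']$ are palindromes; since a $\mups$ that contains another $\mups$ must be strictly longer, $S[b..e]$ has length at least $3$, so $S[b+1..e-1]$ is a repeat in $S$.

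First I would handle the alignment of the two palindromes. Because $S[b..e]$ and $S[b'..e']$ are both palindromes and $[b', e'] \subseteq [b, e]$, I want to show that $S[b'..e']$ is in fact centered at the same center as $S[b..e]$, or more precisely that $S[b'..e']$ occurs again inside $S[b..e]$ at the mirrored position. Using Lemma~\ref{lem:rev_in_pal} applied to the palindrome $S[b..e]$: the substring at the mirror interval $[\,b + e - e',\ b + e - b'\,]$ equals $\rev{S[b'..e']} = S[b'..e']$ (the last equality since $S[b'..e']$ is itself a palindrome). So $S[b'..e']$ occurs both at position $b'$ and at the mirrored position $b + e - e'$ within $S[b..e]$. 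If these two positions are distinct, then $S[b'..e']$ occurs at least twice in $S$, contradicting that $\m'$ is unique. Hence the two positions coincide, which forces $S[b'..e']$ to be centered at $\frac{b+e}{2}$, the center of $S[b..e]$; equivalently $b' + e' = b + e$.

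Now I would derive the final contradiction from the second $\mups$ condition applied to $\m$. Since $\m$ and $\m'$ are distinct and share a center, $[b', e']$ is strictly contained in $[b, e]$ with $b < b'$ and $e' < e$ simultaneously (the strictness propagates to both sides because they share a center). In particular $[b', e'] \subseteq [b+1, e-1]$, so $S[b'..e']$ is a substring of $S[b+1..e-1]$. But $S[b+1..e-1]$ is a repeat in $S$, hence it occurs at least twice; every occurrence of $S[b+1..e-1]$ carries with it an occurrence of its substring $S[b'..e']$, so $S[b'..e']$ also occurs at least twice in $S$. This contradicts the uniqueness of the palindromic substring $S[b'..e'] = \m'$. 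The remaining degenerate situation is $|S[b..e]| \le 2$: then $S[b..e]$ cannot strictly contain another nonempty interval that is also a $\mups$ of length $\geq 1$ unless $\m' = \m$, so this case is vacuous.

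I expect the main obstacle to be the first step — cleanly ruling out the misaligned case and concluding that a contained $\mups$ must be concentric with the containing one. One must be careful that Lemma~\ref{lem:rev_in_pal} is being applied with the right index shift (the lemma is stated for a palindrome indexed from $1$, so I would either re-index $S[b..e]$ or restate the mirror formula as $S[b + e - j .. b + e - i] = \rev{S[i..j]}$ for $b \le i \le j \le e$), and that the ``two occurrences are distinct'' dichotomy is exhaustive. Once concentricity is in hand, the second step is essentially immediate from the $\mups$ definition. Everything here is $O(1)$ combinatorial reasoning; no algorithmic content is needed for this lemma.
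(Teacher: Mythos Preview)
Your proof is correct and follows essentially the same approach as the paper: split on whether the two $\mupss$ share a center, use Lemma~\ref{lem:rev_in_pal} to produce a second occurrence of the inner palindrome when the centers differ, and finish the concentric case from the $\mups$ definition. The only cosmetic difference is in the concentric case: the paper contradicts minimality of the outer $\mups$ (the core $S[b+1..e-1]$ contains the unique $v$, hence is itself unique), whereas you contradict uniqueness of the inner one (it lies inside the repeating core) --- these are contrapositive forms of the same observation.
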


\begin{proof}
  Consider two $\mupss$ $u,v$ such that $u$ contains $v$.
  If $u$ and $v$ have the same center, then $u$ is not a $\mups$.
  On the other hand, if $u$ and $v$ have a different center, 
  we have from Lemma~\ref{lem:rev_in_pal} and that $v$ is a palindromic substring,
  $v$ occurs in $u$ at least twice.
  This contradicts that $v$ is unique.
\end{proof}

From this lemma, we can see that no pair of distinct $\mupss$ begin nor end at the same position.
This fact implies that the number of $\mupss$ is at most $n$ for any string of length $n$.
The following lemma states a characterization of $\supss$ by $\mupss$.

\begin{lemma} \label{lem:unimups_in_sups}
  For any $\sups$ $S[i..j]$ for some interval,
  there exists exactly one $\mups$ that is contained in $[i,j]$.
  Furthermore, the $\mups$ has the same center as $S[i..j]$.
\end{lemma}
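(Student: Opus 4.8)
The plan is to prove existence and uniqueness separately, and in both cases to exploit the fact that a $\sups$ $S[i..j]$ is a \emph{palindrome} that is unique in $S$, together with the non-nesting property of $\mupss$ from Lemma~\ref{lem:non_nesting}. For existence, I would start from the palindrome $P = S[i..j]$ and ``shrink it toward its center'': consider the sequence of palindromic substrings $S[i..j], S[i+1..j-1], S[i+2..j-2], \ldots$ all sharing the center $\frac{i+j}{2}$. The first one, $S[i..j]$, is unique; the innermost ones (length $1$ or $2$) may or may not be unique. Let $S[i'..j']$ be the \emph{shortest} palindrome in this chain that is still unique in $S$ (it exists since $S[i..j]$ itself qualifies). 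Then either $|S[i'..j']| \le 2$, or $S[i'+1..j'-1]$ is strictly shorter and hence — by minimality of $S[i'..j']$ in the chain — a repeat. In either case $S[i'..j']$ satisfies the two defining conditions of a $\mups$, so $S[i'..j'] \in \M_S$, it is contained in $[i,j]$, and it has the same center as $S[i..j]$. This establishes existence of at least one $\mups$ contained in $[i,j]$, namely one sharing the center of $S[i..j]$.

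For uniqueness, suppose toward a contradiction that $[i,j]$ contains two distinct $\mupss$ $\m = [b,e]$ and $\m' = [b',e']$. By Lemma~\ref{lem:non_nesting} neither contains the other, so (after possibly swapping them) we have $b < b'$ and $e < e'$, with $i \le b$ and $e' \le j$. Now I want to derive a contradiction with the fact that $S[i..j]$ is a \emph{shortest} unique palindrome containing $[s,t]$. The idea: the $\mups$ $\m$ lives strictly inside $[i,j]$ and, being a palindrome, forces extra occurrences. More precisely, since $S[i..j]$ is a palindrome, Lemma~\ref{lem:rev_in_pal} applied to the occurrence of $\m$ at $[b,e]$ inside $S[i..j]$ gives a palindromic substring (the mirror image of $\m$) at the symmetric position, and by combining this with the palindromic structure one can exhibit a unique palindrome containing $[s,t]$ that is strictly shorter than $S[i..j]$ — contradicting minimality. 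The cleanest route is probably: show that the $\mups$ sharing the center of $S[i..j]$ (constructed in the existence part) is already equal to $S[i..j]$, because any unique palindrome obtained by shrinking $S[i..j]$ toward its center would itself be a shorter unique palindrome still containing $[s,t]$, so the shrinking stops immediately; hence $S[i..j]$ is itself a $\mups$. Then any \emph{other} $\mups$ inside $[i,j]$ would be strictly contained in this $\mups$ $S[i..j]$, contradicting Lemma~\ref{lem:non_nesting}.

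So the structure I would actually write is: (1) run the shrinking argument to get a $\mups$ $\m^* \subseteq [i,j]$ with the same center as $S[i..j]$; (2) observe $\m^* = S[i..j]$, for otherwise $\m^*$ is a strictly shorter unique palindromic substring still containing $[s,t]$ (since shrinking a palindrome toward its center keeps it containing any interval that the smaller palindrome still covers — and here $[s,t] \subseteq \m^*$ must be checked, which is where I should be careful: a priori $[s,t]$ might not be inside $\m^*$). This last point is the main obstacle: I need that the $\mups$ produced by shrinking still contains $[s,t]$. If it does not, the shrinking argument only shows $S[i..j]$ contains \emph{some} $\mups$, not that $S[i..j]$ \emph{is} one, and the uniqueness part needs the separate Lemma~\ref{lem:non_nesting} argument above. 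I expect the resolution is that $S[i..j]$ being a \emph{shortest} such unique palindrome already forces it to be a $\mups$ (step 2 goes through because if $\m^*$ were shorter it would still contain $[s,t]$ — this requires knowing $[s,t]$ straddles the center or is positioned so that central shrinking preserves containment, which should follow from $S[i..j]$ being shortest), and then uniqueness is immediate from non-nesting. The delicate bookkeeping about whether $[s,t]$ survives the shrink is the one place I would slow down and argue carefully; everything else is a direct application of Lemmas~\ref{lem:rev_in_pal} and~\ref{lem:non_nesting}.
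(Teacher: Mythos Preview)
Your existence argument (shrinking $S[i..j]$ toward its center until you first hit a repeat or reach length $\le 2$) is correct and is exactly what the paper tacitly uses.

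The uniqueness argument you say you would ``actually write'' is wrong. Step~(2) claims $\m^* = S[i..j]$, i.e.\ that every $\sups$ is itself a $\mups$. This is false. In the paper's own running example, the $\sups$ for $[6,7]$ is $S[3..7]=\mathtt{baaab}$, while the $\mups$ with the same center is $S[4..6]=\mathtt{aaa}$; here $\m^*=[4,6]\subsetneq[3,7]$. Your worry about whether $[s,t]\subseteq\m^*$ after shrinking is precisely the failure: in that example $[s,t]=[6,7]\not\subseteq[4,6]$, so the shrunk palindrome does \emph{not} still cover $[s,t]$, and there is no contradiction with minimality of the $\sups$. The ``shortest'' hypothesis simply does not force $S[i..j]$ to be a $\mups$.

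You had the right tool earlier and abandoned it. The reflection via Lemma~\ref{lem:rev_in_pal} is the correct route, but the contradiction is with \emph{uniqueness of the second $\mups$}, not with minimality of the $\sups$. Concretely: you already have a $\mups$ $\m^*$ sharing the center of $S[i..j]$. If a second $\mups$ $\m'=[b',e']\subseteq[i,j]$ exists, Lemma~\ref{lem:non_nesting} forces $\m'$ to have a different center. Reflect $\m'$ across the center of the palindrome $S[i..j]$: by Lemma~\ref{lem:rev_in_pal} the mirror interval carries $\rev{S[b'..e']}=S[b'..e']$ (since $\m'$ is itself a palindrome), and this mirror interval is distinct from $[b',e']$ because the centers differ. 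Hence $\m'$ occurs at least twice in $S$, contradicting that it is a $\mups$. Note this argument never uses that $S[i..j]$ is \emph{shortest}; it holds for any unique palindromic substring, which is why the paper's proof is so brief.
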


\begin{proof}
Let $S[i..j]$ be a $\sups$ for some interval.
$S[i..j]$ contains a $\mups$ $S[x_1..y_1]$ of the same center,
i.e., $\frac{i+j}{2} = \frac{x_1+y_1}{2}$, s.t. $j-i \geq y_1-x_1$.
Suppose that there exists another $\mups$ $S[x_2..y_2]$ contained in $[i, j]$.
From Lemma~\ref{lem:non_nesting}, $S[x_1..y_1]$ and $S[x_2..y_2]$ do not have the same center.
On the other hand, if $S[x_1..y_1]$ and $S[x_2..y_2]$ have different centers,
then $S[x_2..y_2]$ occurs at least two times in $S[i..j]$ by Lemma~\ref{lem:rev_in_pal},
since $S[x_2..y_2] = \rev{S[x_2..y_2]}$.
This contradicts that $S[x_2..y_2]$ is a $\mups$.
\end{proof}

From the above lemma,
any $\sups$ contains exactly one $\mups$ which has the same center 
(see also Fig.~\ref{fig:example}).
Below, we will describe the relationship
between a query interval $[s, t]$ and the $\mups$ contained in a $\sups$ for $[s, t]$.
Before explaining this, we define the following notations.

\begin{itemize}
\item $\M([s, t])$ : the set of $\mupss$ containing $[s, t]$.
\item $\predmups[t] = i$ s.t. $i = \max \{ k \mid e_k \leq t \}$.
\item $\succmups[s] = i$ s.t. $i = \min \{ k \mid s \leq b_k \}$.
\end{itemize}
In other words, $\m_{\predmups[t]}$ is the rightmost $\mups$
which ends before position $t+1$,
and $\m_{\succmups[s]}$ is the leftmost $\mups$
which begins after position $s-1$.

\begin{lemma} \label{lem:candidates}
Let $S[i..j]$ be a $\sups$ for an interval $[s, t]$.
Then, the unique $\mups$ $S[x..y]$ 
contained in $[i,j]$ is in
$\{ \predmups[t] \} \cup \M([s, t]) \cup \{\succmups[s] \}$.
\end{lemma}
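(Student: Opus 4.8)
The plan is to show that the unique $\mups$ $S[x..y]$ contained in the $\sups$ $S[i..j]$ cannot have its interval lying strictly to the left of $\m_{\predmups[t]}$, nor strictly to the right of $\m_{\succmups[s]}$, and that if it lies properly between these two indices in the sorted order $\M_S$, then it must contain $[s,t]$, i.e. it belongs to $\M([s,t])$. Since $\M_S$ is sorted by beginning position and (by Lemma~\ref{lem:non_nesting}) also by ending position, "strictly to the left of $\m_{\predmups[t]}$" means $y < t$ is impossible once we rule out $y \le t$ with $[x,y]\ne\m_{\predmups[t]}$; more precisely I will argue by a case split on the position of the index of $S[x..y]$ relative to $\predmups[t]$ and $\succmups[s]$.

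First I would recall the setup from Lemma~\ref{lem:unimups_in_sups}: $S[x..y]$ is contained in $[i,j]$, shares its center with $S[i..j]$, and $j-i \ge y-x$. Since $[i,j] \supseteq [s,t]$, the center $\frac{i+j}{2}=\frac{x+y}{2}$ lies in the "reasonable" range around $[s,t]$; in particular $i \le s \le t \le j$. Now consider the three cases for where $[x,y]$ sits. \emph{Case 1: $y < t$.} Then $S[x..y]$ ends before position $t$, so its index $k$ satisfies $e_k = y \le t$, hence $k \le \predmups[t]$. I claim $k = \predmups[t]$. Suppose not, so $k < \predmups[t]$, and let $\m_{\predmups[t]} = [b,e]$ with $e \le t$. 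Because $\mupss$ do not nest and are sorted, $b_k < b$ and $e_k < e \le t \le j$. But also $i \le x = b_k$. Then the interval $[i,j]$ contains $\m_{\predmups[t]} = [b,e]$ as well (since $i \le b_k < b$ and $e \le t \le j$), giving two distinct $\mupss$ inside $[i,j]$ — contradicting Lemma~\ref{lem:unimups_in_sups}. Hence $k = \predmups[t]$.

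\emph{Case 2: $x > s$}, symmetric to Case 1, forces $k = \succmups[s]$ by the mirror argument (any $\mups$ strictly right of $\m_{\succmups[s]}$ together with $S[x..y]$ would put a second $\mups$, namely $\m_{\succmups[s]}$, inside $[i,j]$). \emph{Case 3: $x \le s$ and $y \ge t$.} Then $S[x..y]$ itself contains $[s,t]$, so $[x,y] \in \M([s,t])$. Since these three cases are exhaustive ($y \ge t$ and $x \le s$ is Case~3; otherwise $y<t$ or $x>s$), in every case $[x,y] \in \{\m_{\predmups[t]}\} \cup \M([s,t]) \cup \{\m_{\succmups[s]}\}$, which is exactly the claim. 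The main obstacle is handling Case~1 (and its mirror) cleanly: the key point that makes it work is that the non-nesting property of $\mupss$ (Lemma~\ref{lem:non_nesting}) forces the ordering of both endpoints to be consistent, so that a $\mups$ whose end precedes $t$ but which is not the \emph{rightmost} such one would leave room for $\m_{\predmups[t]}$ to also fit inside $[i,j]$, contradicting the uniqueness in Lemma~\ref{lem:unimups_in_sups}; I should double-check the boundary/degenerate situations (e.g. when $S[x..y]$ has length $1$ or $2$, or when $\predmups[t]$ or $\succmups[s]$ is one of the sentinels $\m_0, \m_{m+1}$) but these do not affect the argument.
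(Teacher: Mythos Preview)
Your proof is correct and follows essentially the same approach as the paper: both argue that if the unique $\mups$ $S[x..y]$ inside the $\sups$ fails to contain $[s,t]$ and is not the extremal $\m_{\predmups[t]}$ (resp.\ $\m_{\succmups[s]}$), then the non-nesting property forces that extremal $\mups$ to also lie inside $[i,j]$, contradicting Lemma~\ref{lem:unimups_in_sups}. The only cosmetic difference is that the paper phrases this as a single proof by contradiction (assume $S[x..y]$ is outside the set and split on $y<t$ vs.\ $s<x$), whereas you do a direct three-way case split; the underlying argument is identical.
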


\begin{proof}
  Assume to the contrary that there exists a $\sups$ $S[i..j]$ that contains a
  $\mups$ $S[x..y] \notin \{\predmups[t]\} \cup \M([s, t]) \cup \{\succmups[s]\}$.
Since $S[x..y]\not\in\M([s,t])$, $[x,y]$ does not contain $[s,t]$. Thus, there can be the following two cases:
\begin{itemize}
\item If $y < t$, there must exist $\mups$ $[x', y']$ s.t. $y < y' \leq t$, since $S[x,y] \neq \predmups[t]$.
By Lemma~\ref{lem:non_nesting}, $x < x'$.
Thus $i \leq x < x' \leq y' \leq t \leq j$ holds.
However, this contradicts Lemma~\ref{lem:unimups_in_sups}.
\item If $s < x$, there must exist $\mups$ $[x', y']$ s.t. $s \leq x' < x$, since $S[x,y] \neq \succmups[s]$.
By Lemma~\ref{lem:non_nesting}, $y' \leq y$.
Thus $i \leq s \leq x' \leq y' \leq y \leq j$ holds,
However, this contradicts Lemma~\ref{lem:unimups_in_sups}.
\end{itemize}
Therefore the lemma holds.

\end{proof}

Next, we want to explain how $\supss$ are related to $\mupss$.
It is easy to see that there may not be a $\sups$ for some query interval.
We first show a case where there are no $\supss$ for a given query.
The following corollary is obtained from Lemma~\ref{lem:unimups_in_sups}.

\begin{corollary} \label{coro:no_answer}
Let $S[x_1..y_1]$ and $S[x_2..y_2]$ be $\mupss$ contained in a query interval $[s, t]$.
There is no $\sups$ for an interval $[s, t]$.
\end{corollary}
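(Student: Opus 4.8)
The plan is to derive Corollary~\ref{coro:no_answer} directly from Lemma~\ref{lem:unimups_in_sups} by a short contradiction argument. Suppose, for the sake of contradiction, that there exists a $\sups$ $S[i..j]$ for the interval $[s, t]$. By definition of a $\sups$, the interval $[i, j]$ contains $[s, t]$, i.e., $i \leq s \leq t \leq j$.

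Now I would bring in the two given $\mupss$. Since $S[x_1..y_1]$ and $S[x_2..y_2]$ are both contained in $[s, t]$, and $[s,t]$ is in turn contained in $[i, j]$, both of these $\mupss$ are contained in $[i, j]$. But Lemma~\ref{lem:unimups_in_sups} asserts that any $\sups$ contains \emph{exactly one} $\mups$. Since $S[x_1..y_1]$ and $S[x_2..y_2]$ are distinct $\mupss$ (they are two $\mupss$ contained in $[s,t]$), this is a contradiction. Hence no $\sups$ for $[s, t]$ can exist.

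The only point that requires a small amount of care — and what I expect to be the (very mild) main obstacle — is justifying that the two $\mupss$ in the hypothesis are genuinely distinct, rather than the same $\mups$ written twice. Here I would appeal to the remark following Lemma~\ref{lem:non_nesting}: no two distinct $\mupss$ share a beginning or ending position, so a $\mups$ is determined by its interval, and the statement "$S[x_1..y_1]$ and $S[x_2..y_2]$ are $\mupss$ contained in $[s,t]$" is naturally read as referring to two different intervals. (If one instead reads the corollary as simply positing the existence of two $\mupss$ inside $[s,t]$, distinctness is part of the hypothesis.) Once distinctness is pinned down, the argument is immediate: two distinct $\mupss$ inside $[s,t] \subseteq [i,j]$ contradict the uniqueness clause of Lemma~\ref{lem:unimups_in_sups}. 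Since the corollary is explicitly flagged in the text as "obtained from Lemma~\ref{lem:unimups_in_sups}", I would keep the proof to these two or three sentences and not invoke anything further.
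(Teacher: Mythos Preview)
Your proposal is correct and matches the paper's approach exactly: the paper states the corollary as an immediate consequence of Lemma~\ref{lem:unimups_in_sups} without writing out a separate proof, and your contradiction argument (a $\sups$ for $[s,t]$ would contain $[s,t]$ and hence both $\mupss$, violating the ``exactly one'' clause of Lemma~\ref{lem:unimups_in_sups}) is precisely the intended reasoning. Your remark on distinctness is a fair clarification but not something the paper addresses, since it simply treats the two $\mupss$ as distinct by hypothesis.
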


From this corollary, a $\sups$ for an interval $[s, t]$ can exist
if the number of $\mupss$ contained in $[s, t]$ is less than or equal to 1.
The following two lemmas show what the $\sups$ for $[s, t]$ is, when $[s, t]$ contains only one $\mups$,
and when $[s, t]$ does not contain any $\mupss$.

\begin{lemma} \label{contain_one_mups}
Let $S[x..y]$ be the only $\mups$ contained in the query interval $[s, t]$.
If $S[x-z, y+z]$ is a palindromic substring where $z = \max \{ x-s, t-y \}$,
then $S[x-z, y+z]$ is the $\sups$ for $[s, t]$.
Otherwise, there is no $\sups$ for $[s, t]$.
\end{lemma}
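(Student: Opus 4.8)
The plan is to derive everything from Lemma~\ref{lem:unimups_in_sups}. First I would let $S[i..j]$ be an arbitrary $\sups$ for $[s, t]$. Since $S[x..y]$ is contained in $[s, t]$ and, by definition of $\sups$, $[i, j]$ contains $[s, t]$, we have $[x, y] \subseteq [s, t] \subseteq [i, j]$. By Lemma~\ref{lem:unimups_in_sups} the interval $[i, j]$ contains exactly one $\mups$; since $S[x..y]$ is a $\mups$ contained in $[i, j]$, that $\mups$ is $S[x..y]$, and it has the same center $\frac{x+y}{2}$ as $S[i..j]$. Combined with $[x, y] \subseteq [i, j]$, the equal-center condition forces $[i, j] = [x - z', y + z']$ for some integer $z' \geq 0$, and then $[s, t] \subseteq [i, j]$ gives $x - z' \leq s$ and $t \leq y + z'$, i.e.\ $z' \geq \max\{x - s, t - y\} = z$. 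Thus every $\sups$ for $[s, t]$ has the form $S[x - z'..y + z']$ with $z' \geq z$; in particular none can exist unless $1 \leq x - z$ and $y + z \leq n$.

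For the ``if'' part, I would assume that $S[x - z..y + z]$ is a palindromic substring. It contains $[s, t]$ because $x - z \leq s$ and $y + z \geq t$, and it is unique because it contains the unique substring $S[x..y]$ (any occurrence of a string induces an occurrence of each of its substrings). Hence the set of unique palindromic substrings of $S$ containing $[s, t]$ is non-empty, so a $\sups$ for $[s, t]$ exists; by the previous paragraph every such $\sups$ has length at least $(y + z) - (x - z) + 1$, which is exactly $|S[x - z..y + z]|$. Therefore $S[x - z..y + z]$ attains this minimum length and is a $\sups$ for $[s, t]$ --- and the only one, since length $(y + z) - (x - z) + 1$ forces $z' = z$.

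For the ``otherwise'' part, I would argue the contrapositive: if some $\sups$ $S[i..j] = S[x - z'..y + z']$ with $z' \geq z$ exists, then $S[x - z..y + z]$ must be a palindromic substring. Indeed, $S[x - z..y + z]$ is obtained from the palindrome $T := S[x - z'..y + z']$ by deleting exactly $z' - z$ characters from each end, so it equals $T[a..|T| - a + 1]$ with $a := z' - z + 1$; applying Lemma~\ref{lem:rev_in_pal} to $T$ with these indices yields $T[a..|T| - a + 1] = \rev{T[a..|T| - a + 1]}$, so $S[x - z..y + z]$ is a palindrome. Consequently, if $S[x - z..y + z]$ is not a palindromic substring (which includes the case where its indices fall outside $[1, n]$), no $\sups$ for $[s, t]$ exists.

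The crux is the first paragraph: using Lemma~\ref{lem:unimups_in_sups} to pin down that the unique $\mups$ inside any $\sups$ for $[s, t]$ is precisely $S[x..y]$, which forces every $\sups$ candidate to be a palindrome centered at $\frac{x+y}{2}$ extending $S[x..y]$ symmetrically. After that, the two directions are routine: a superstring of a unique substring is unique, and a symmetric sub-block of a palindrome is again a palindrome via Lemma~\ref{lem:rev_in_pal}.
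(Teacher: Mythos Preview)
Your proof is correct and follows essentially the same approach as the paper: both use Lemma~\ref{lem:unimups_in_sups} to force any $\sups$ for $[s,t]$ to share the center $\frac{x+y}{2}$ with the unique contained $\mups$ $S[x..y]$, and then conclude that $S[x-z..y+z]$ is the only candidate. Your argument is more detailed than the paper's---in particular, you explicitly handle the ``otherwise'' direction via the contrapositive and Lemma~\ref{lem:rev_in_pal}, whereas the paper leaves this implicit---but the underlying idea is the same.
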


\begin{proof}
Assume that there exists a $\sups$ $u$ for $[s, t]$
which has the same center with a $\mups$ other than $S[x..y]$.
By the definition of $\sups$, $u$ should contain $[s, t]$.
Since $[s, t]$ contains $[x, y]$,
$u$ contains two $\mupss$, a contradiction.
Thus, there can be no $\sups$ s.t. the center is not $\frac{x+y}{2}$.
It is clear that $S[x-z, y+z]$ is a unique palindromic substring
if $S[x-z, y+z]$ is a palindromic substring where $z = \max \{ x-s, t-y \}$.
Therefore the lemma holds.

\end{proof}

\begin{lemma} \label{lem:answers}
Let $[s, t]$ be the query interval.
Then $\supss$ for $[s, t]$ are the shortest of the following candidates.
\begin{enumerate}
\item $S[x..y]$ s.t. $[x, y] \in \M([s, t])$,
\item $S[x-t+y..t]$ s.t. $[x, y] = \predmups([s, t])$, if it is a palindromic substring,
\item $S[s..y+x-s]$ s.t. $[x, y] = \succmups([s, t])$, if it is a palindromic substring.
\end{enumerate}
\end{lemma}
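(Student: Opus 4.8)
The plan is to combine Lemmas~\ref{lem:unimups_in_sups} and~\ref{lem:candidates} with a structural description of the palindromic substrings sharing a fixed center. By Lemma~\ref{lem:unimups_in_sups}, every $\sups$ $S[i..j]$ for $[s,t]$ contains exactly one $\mups$ $S[x..y]$ and has the same center $c=\frac{x+y}{2}$; by Lemma~\ref{lem:candidates}, that $\mups$ lies in $\{\m_{\predmups[t]}\}\cup\M([s,t])\cup\{\m_{\succmups[s]}\}$. Hence it suffices, for each $\mups$ $S[x..y]$ in this (at most three element) set, to determine the shortest \emph{unique} palindromic substring with center $c$ that contains $[s,t]$, and then to take the shortest of the resulting strings.

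First I would establish the structural fact that the palindromic substrings of $S$ with a fixed center $c$ form a nested chain: they are exactly the strings $S[x-k..y+k]$ for $0\le k\le k_{\max}$, where $k_{\max}\ge 0$ is the largest $k$ with $S[x-k..y+k]$ a palindrome. Indeed, trimming equally many characters from both ends of a palindrome yields a palindrome (immediate from Lemma~\ref{lem:rev_in_pal}), and $S[x..y]$ itself belongs to the chain since it is a palindrome with center $c$. Now, because $S[x..y]$ is a $\mups$, either $|S[x..y]|\le 2$ or $S[x+1..y-1]$ is a repeat; in either case every palindrome with center $c$ strictly shorter than $S[x..y]$ is a substring of a repeat and hence a repeat, whereas every palindrome with center $c$ that contains $S[x..y]$ is a superstring of the unique string $S[x..y]$ and hence unique. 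Thus the unique palindromic substrings with center $c$ are precisely $\{S[x-k..y+k] : 0\le k\le k_{\max}\}$, so the shortest one containing $[s,t]$ is $S[x-k^*..y+k^*]$ with $k^*=\max\{0,\,x-s,\,t-y\}$ whenever $k^*\le k_{\max}$, i.e.\ whenever $S[x-k^*..y+k^*]$ is a palindromic substring, and otherwise no unique palindromic substring with center $c$ contains $[s,t]$. Evaluating $k^*$ in the three cases then yields items~1--3: if $S[x..y]\in\M([s,t])$ then $x\le s\le t\le y$, so $k^*=0$ and the candidate is $S[x..y]$; if $S[x..y]=\m_{\predmups[t]}$ with $x\le s$ then (using $y\le t$) $k^*=t-y$ and the candidate is $S[x-t+y..t]$; symmetrically $S[x..y]=\m_{\succmups[s]}$ with $y\ge t$ gives the candidate $S[s..y+x-s]$.

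I expect the main obstacle to be the degenerate boundary cases, where $\m_{\predmups[t]}$ has start position $x>s$ (so the formula $S[x-t+y..t]$ of item~2 need not even contain $[s,t]$) or, symmetrically, $\m_{\succmups[s]}$ has end position $y<t$. I would resolve this by showing that in such a case the $\mups$ in question actually coincides with the $\mups$ on the other side and is therefore still listed. Concretely, suppose the $\mups$ contained in a given $\sups$ is $S[x..y]=\m_{\predmups[t]}$ with $x>s$; since $y\le t$ this forces $[x,y]\subseteq[s,t]$, and because a $\sups$ for $[s,t]$ exists, Corollary~\ref{coro:no_answer} tells us $[x,y]$ is the \emph{only} $\mups$ contained in $[s,t]$. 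Then Lemma~\ref{lem:non_nesting} forbids any $\mups$ with start in $[s,x-1]$ (one with end $\le t$ would be a second $\mups$ inside $[s,t]$, one with end $>t$ would strictly contain $[x,y]$), so $[x,y]=\m_{\succmups[s]}$ as well; here $k^*=x-s$ and the $\sups$ is $S[s..y+x-s]$, which is item~3. The mirror argument handles $\m_{\succmups[s]}$ with $y<t$ via item~2.

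Finally I would close the equivalence. Completeness follows from the above: every $\sups$ for $[s,t]$ equals one of the listed candidates. For soundness, call a candidate from items~1--3 \emph{valid} if it is a palindromic substring containing $[s,t]$ (candidates from item~1 are automatically valid; those from items~2--3 by their stated guards, discarding any degenerate string that fails to contain $[s,t]$). If $w$ is a shortest valid candidate but not a $\sups$, then some shorter unique palindromic substring contains $[s,t]$, hence some $\sups$ does, and by completeness that $\sups$ is a listed candidate strictly shorter than $w$, contradicting minimality of $w$. Therefore the $\supss$ for $[s,t]$ are exactly the shortest valid candidates among items~1--3 (so there may be several, all of equal length), and if no valid candidate exists then there is no $\sups$ for $[s,t]$.
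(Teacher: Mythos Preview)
Your argument is correct and follows the same skeleton as the paper's proof: invoke Lemma~\ref{lem:candidates} to restrict attention to the three possible $\mupss$, observe that each listed candidate is a unique palindromic substring containing $[s,t]$ whenever it is a palindrome, and conclude that the shortest among them is a $\sups$. The paper's own proof stops there, in three sentences.

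Where you go further is in rigor. You make explicit the chain structure of palindromes sharing a fixed center (so that the shortest unique palindrome with center $\frac{x+y}{2}$ containing $[s,t]$ is exactly $S[x-k^*..y+k^*]$ with $k^*=\max\{0,x-s,t-y\}$), and you carefully treat the degenerate situation where $\m_{\predmups[t]}$ begins to the right of $s$ (or symmetrically for $\succmups$), showing via Corollary~\ref{coro:no_answer} and Lemma~\ref{lem:non_nesting} that this $\mups$ then coincides with $\m_{\succmups[s]}$ and is therefore still captured by one of the listed items. The paper does not address this case at all; in practice the algorithm only invokes Lemma~\ref{lem:answers} when $[s,t]$ contains no $\mups$, in which setting $\m_{\predmups[t]}$ necessarily starts before $s$ and the degeneracy cannot occur, so the paper's terse argument suffices for its purposes. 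Your version, with the explicit validity guard (``palindromic substring containing $[s,t]$'') and the completeness/soundness closure, proves the lemma as stated in full generality. One minor point: your claim ``here $k^*=x-s$'' in the degenerate paragraph tacitly assumes $x-s\ge t-y$; when $x>s$ but $t-y>x-s$, item~2's formula already contains $[s,t]$ and no redirection to item~3 is needed, so the conclusion still holds.
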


\begin{proof}
It is clear that $S[x..y]$ is a unique palindromic substring containing $[s, t]$
if $[x, y] \in \M([s, t])$ exists.
It is also clear that
if $[x, y] = \predmups([s, t])$ or $[x, y] = \succmups([s, t])$,
then $S[x-t+y..t]$ or $[s..y+x-s]$, respectively, are unique palindromic substrings,
if they are palindromic substrings.
By Lemma~\ref{lem:candidates}, we do not need to consider palindromic substrings which
have the same center as $\mupss$ other than the candidates considered above.
Thus the shortest of the candidates is $\sups$ for $[s, t]$
(see also Fig.~\ref{fig:sups_candidate}).
\end{proof}

\begin{figure}[tbh]
  \centerline{
    \includegraphics[width = 1.0\textwidth]{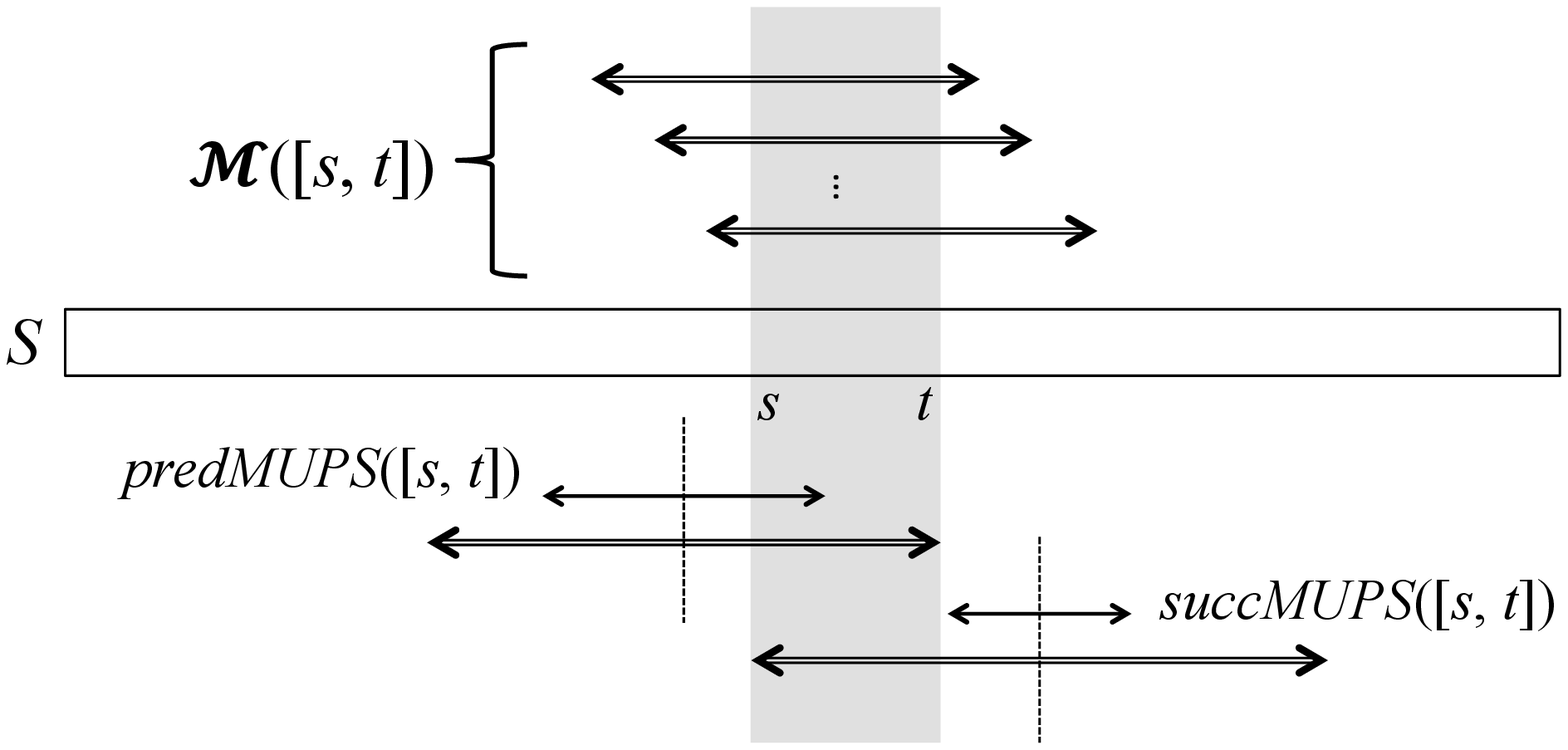}
  }
  \caption{
    Double arrows represent the candidates of $\sups$ for $[s, t]$.
    The shortest of the candidates is $\sups$ for $[s, t]$.
  }
  \label{fig:sups_candidate}
\end{figure}

From the above arguments, 
the number of $\mupss$ is useful to compute $\supss$ for a query interval.
The following lemma shows how to compute the number of $\mupss$ contained in a given interval.

\begin{lemma} \label{lem:catching_mups}
For any interval $[s, t]$,
\begin{itemize}
\item if $\succbeg[s] > \predend[t]$, $[s, t]$ contains no $\mups$,
\item if $\succbeg[s] = \predend[t]$, $[s, t]$ contains only one $\mups$, \\$\m_{\succbeg[s]} = \m_{\predend[t]}$, and
\item if $\succbeg[s] < \predend[t]$, $[s, t]$ contains at least two $\mupss$.
\end{itemize}
\end{lemma}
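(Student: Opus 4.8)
The plan is to reduce the whole statement to a counting fact about the \emph{index range} of $\mupss$ that fit inside $[s,t]$. First I would record a consequence of Lemma~\ref{lem:non_nesting}: since no $\mups$ contains another and, as observed right after that lemma, no two $\mupss$ share a beginning or an ending position, the ordering of $\M_S$ by beginning position coincides with the ordering by ending position. Concretely, $b_1 < b_2 < \cdots < b_m$ and $e_1 < e_2 < \cdots < e_m$; indeed, if $b_i < b_j$ but $e_i \geq e_j$ then $\m_i$ would contain $\m_j$, a contradiction. The sentinels $\m_0 = [-1,-1]$ and $\m_{m+1} = [n+1,n+1]$ preserve this monotonicity and guarantee that $\predmups[t]$ and $\succmups[s]$ are always defined; moreover, since $1 \leq s \leq t \leq n$, one checks that $\succmups[s] \in \{1, \ldots, m+1\}$ and $\predmups[t] \in \{0, 1, \ldots, m\}$.

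The key step is the following equivalence: for every $k$ with $1 \leq k \leq m$, the $\mups$ $\m_k$ is contained in $[s,t]$ if and only if $\succmups[s] \leq k \leq \predmups[t]$. This is immediate from the monotonicity above: $\m_k$ is contained in $[s,t]$ exactly when $s \leq b_k$ and $e_k \leq t$; because the $b_k$ are strictly increasing, $s \leq b_k \iff k \geq \succmups[s]$, and because the $e_k$ are strictly increasing, $e_k \leq t \iff k \leq \predmups[t]$.

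Finally I would simply count integers in that range. If $\succmups[s] > \predmups[t]$, there is no $k$ with $\succmups[s] \leq k \leq \predmups[t]$, so no $\mups$ is contained in $[s,t]$. If $\succmups[s] = \predmups[t]$, the only such $k$ is this common value, which then lies in $\{1, \ldots, m\}$ and hence is a genuine $\mups$, giving exactly one $\mups$ contained in $[s,t]$, namely $\m_{\succmups[s]} = \m_{\predmups[t]}$. If $\succmups[s] < \predmups[t]$, then $1 \leq \succmups[s] < \predmups[t] \leq m$, so at least the two distinct $\mupss$ $\m_{\succmups[s]}$ and $\m_{\predmups[t]}$ are contained in $[s,t]$, as claimed.

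I do not expect a genuine obstacle; the only points that need a little care are deriving the simultaneous monotonicity of beginnings and endings from the non-nesting property, and handling the sentinel indices so that ``$k$ in range'' really corresponds to a genuine $\mups$ $\m_k$ rather than to $\m_0$ or $\m_{m+1}$.
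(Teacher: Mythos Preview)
Your proof is correct. The paper argues the three cases separately: in each one it sets $i=\predend[t]$ and $j=\succbeg[s]$, reads off from the definitions which of $\m_i$, $\m_j$ (and, in the first case, a hypothetical further $\m_k$) lie inside $[s,t]$, and thereby obtains the claimed count, appealing to the non-nesting structure only implicitly. You instead first make the simultaneous strict monotonicity $b_1<\cdots<b_m$ and $e_1<\cdots<e_m$ explicit from Lemma~\ref{lem:non_nesting}, then prove the single equivalence $\m_k\subseteq[s,t]\iff \succbeg[s]\le k\le \predend[t]$ and simply count indices in that range, checking that the sentinels are excluded. Your route is more uniform and makes the role of non-nesting transparent; the paper's case-by-case argument carries the same content and matches the accompanying figures more directly.
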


\begin{proof}
\leavevmode \par
\begin{itemize}
\item Let $j = \succbeg[s] > \predend[t] = i$.
Then $b_i < s \leq b_j$ and $e_i \leq t < e_j$ hold,
and thus neither of $\m_i$ and $\m_j$ are contained in $[s, t]$.
If we assume that $[s, t]$ contains a $\mups$ $\m_k$ for some $k$,
it should be that $i < k < j$, $b_i < s \leq b_k < b_j$.
However, this contradicts that $j = \succbeg[s]$
(see also the top in Fig.~\ref{fig:mups_num.eps}).
\item Let $\succbeg[s] = \predend[t] = i$.
Since $\succbeg[s] = i$, $b_{i-1}$ should be less than $s$, and $b_i$ at least $s$.
Since $\predend[t] = i$, $e_{i+1}$ should be larger than $t$, and $e_i$ at most $t$.
Thus $[s, t]$ only contains $\m_i$
(see also the middle in Fig.~\ref{fig:mups_num.eps}).
\item Let $i = \succbeg[s] < \predend[t] = j$.
Then $s \leq b_i < b_j$ and $e_i < e_j \leq t$ hold,
which implies $s \leq b_i \leq e_i < t$ and $s < b_j \leq e_j \leq t$.
Thus, both $\m_i$ and $\m_j$ are contained in $[s, t]$
(see also the bottom in Fig.~\ref{fig:mups_num.eps}).
\end{itemize}
\end{proof}
\begin{figure}[h!]
  \centerline{
    \includegraphics[width = 1.0\textwidth]{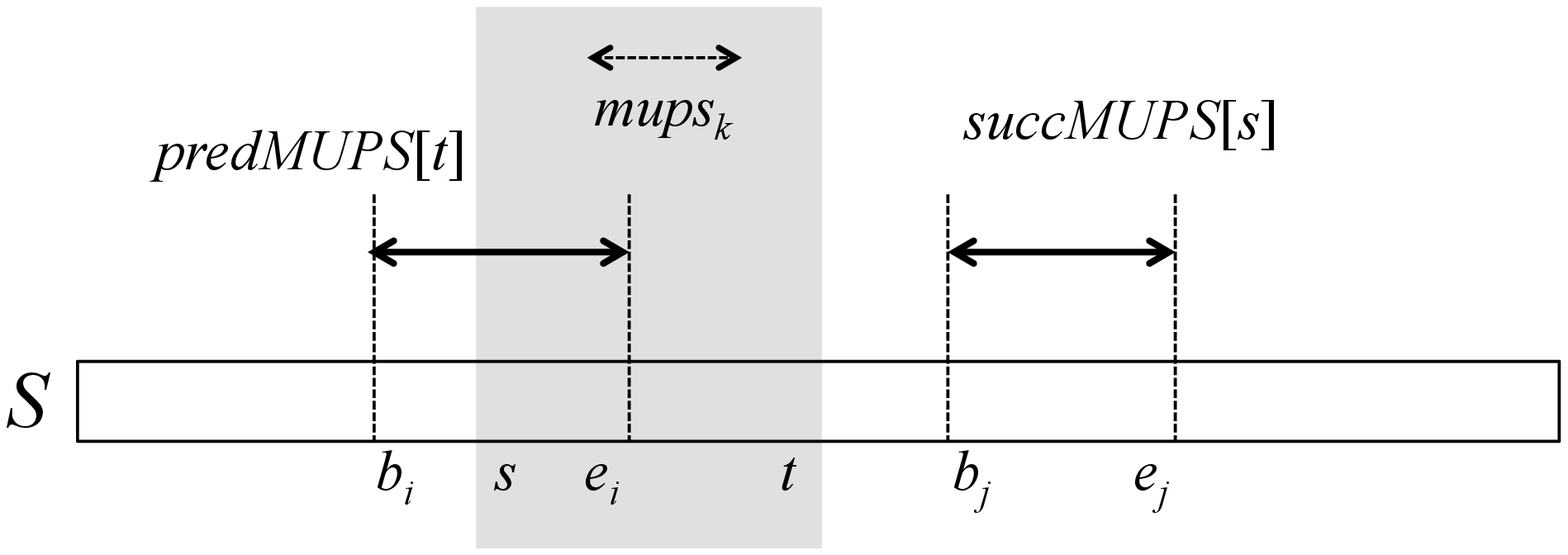}
  }
  \centerline{
    \includegraphics[width = 1.0\textwidth]{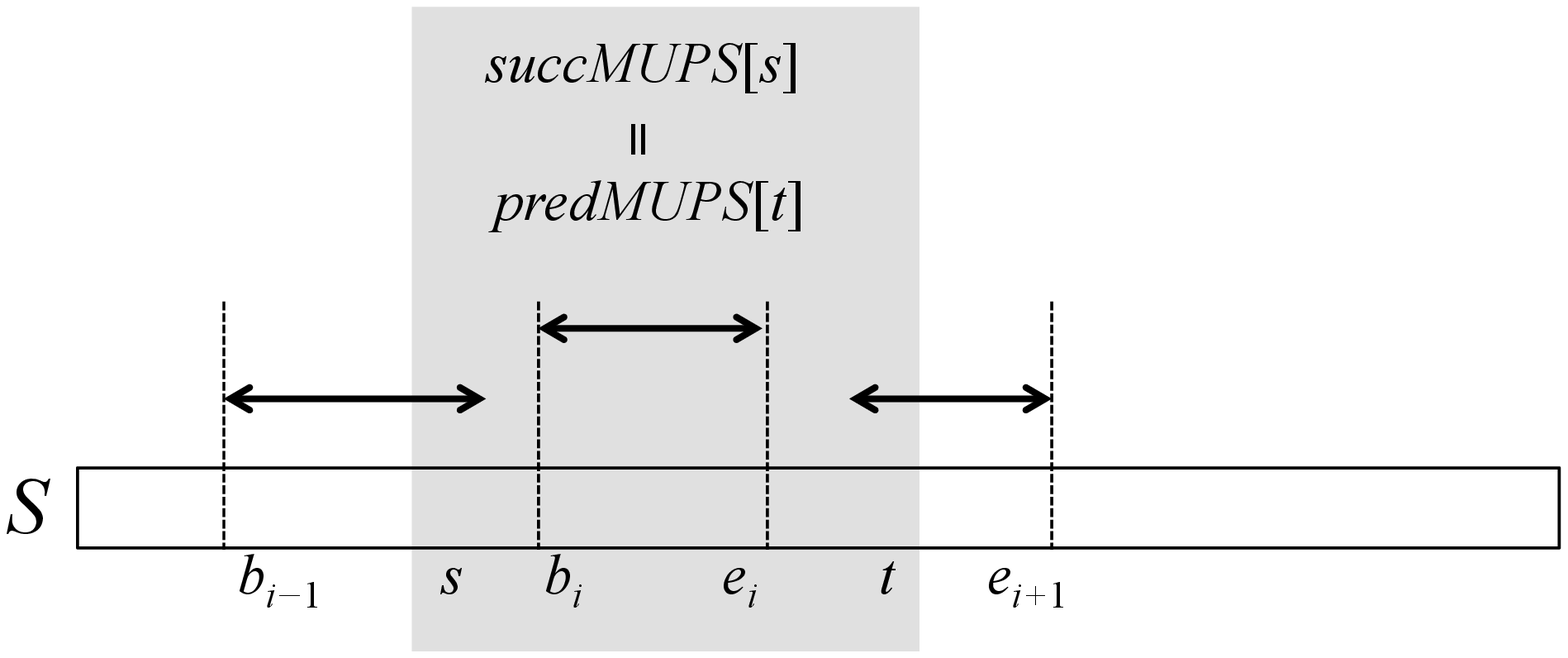}
  }
  \centerline{
    \includegraphics[width = 1.0\textwidth]{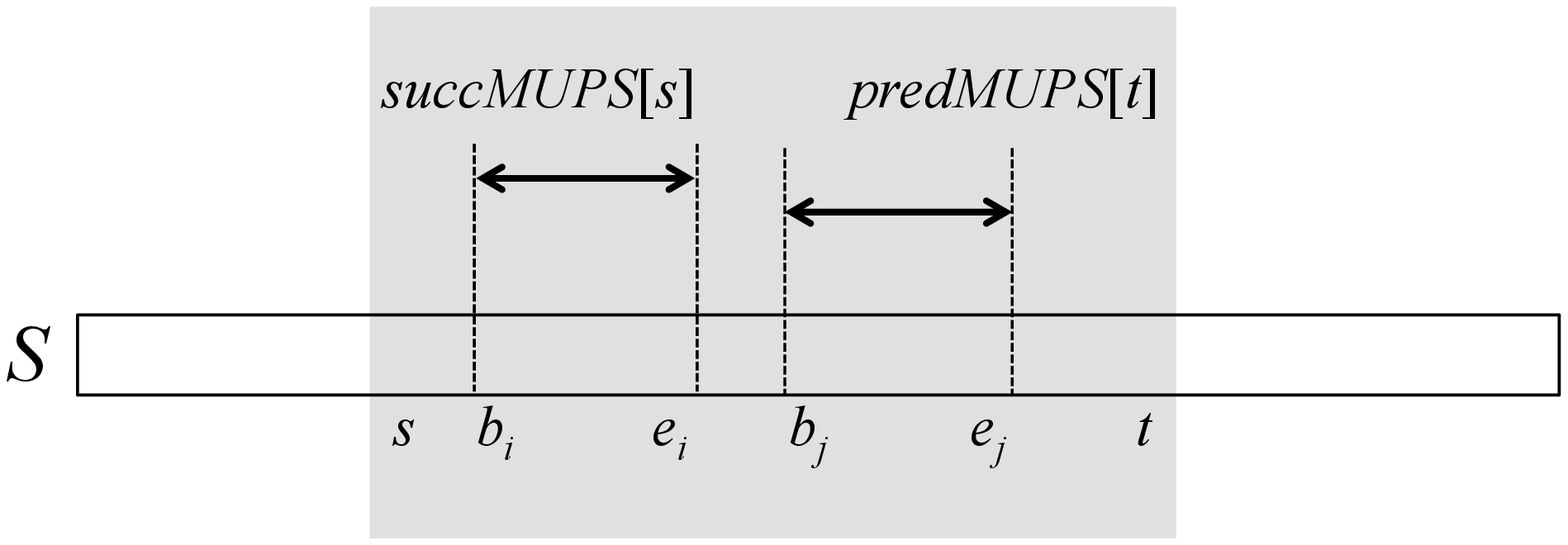}
  }
  \caption{
    Illustrations for proof of Lemma~\ref{lem:catching_mups}.
  }
  \label{fig:mups_num.eps}
\end{figure}

\subsection{Tools}
\label{subsec:tools}
Here, we show some tools for our algorithm.

\begin{lemma}[e.g.,~\cite{eertree}] \label{lem:palindromic_checking}
  For any interval $[i, j]$ in $S$ of length $n$, we can check whether $S[i..j]$
  is a palindromic substring or not in $O(n)$ preprocessing time and constant query time
  with $O(n)$ space.
\end{lemma}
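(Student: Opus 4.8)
The plan is to reduce palindrome-checking of an arbitrary substring to a single table lookup, using Manacher's algorithm in the preprocessing phase.

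In preprocessing I would run Manacher's algorithm on $S$ to compute, for each of the $2n-1$ centers $c \in \{1, 1.5, \ldots, n-0.5, n\}$, the length $R[c]$ of the \emph{longest} palindromic substring of $S$ centered at $c$. Since centers are half-integers it is convenient to store this in an array indexed by $2c \in \{2, 3, \ldots, 2n\}$. Manacher's algorithm runs in $O(n)$ time and $O(n)$ space, and the array $R$ has $O(n)$ entries, so the stated preprocessing bounds hold; moreover Manacher's algorithm naturally clips palindromes at the ends of $S$, so each $R[c]$ is the length of an actual substring of $S$.

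For a query $[i,j]$, the substring $S[i..j]$ has center $c = \frac{i+j}{2}$ and length $j-i+1$, and $c$ and $j-i+1$ have matching parity (an integer center corresponds to an odd length, a half-integer center to an even length). The key observation is monotonicity: peeling off the first and last characters of a palindrome again yields a palindrome, so if the longest palindrome centered at $c$ has length $R[c]$, then for every value $\ell \le R[c]$ of the correct parity there is a palindromic substring of length $\ell$ centered at $c$, namely the substring obtained from the longest one by removing $\frac{R[c]-\ell}{2}$ characters from each end. Hence $S[i..j]$ is a palindromic substring of $S$ if and only if $j-i+1 \le R[c]$, which is one array access and one comparison, i.e., constant query time.

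There is no substantial obstacle here; the lemma is essentially an assembly of a standard linear-time tool, and the only points needing care are the index bookkeeping for half-integer centers and the parity matching noted above. As an alternative that avoids Manacher's algorithm, one may instead build the generalized suffix array of $S$ and $\rev{S}$, its $\LCPA$ array, and a linear-space $O(1)$-time $\rmq$ structure over that array; then $S[i..j]$ is a palindrome iff the longest common prefix of the suffix $S[i..n]$ and the suffix $\rev{S}[n+1-j..n]$ is at least $j-i+1$, which is again answered by a single $\rmq$ query after $O(n)$-time preprocessing.
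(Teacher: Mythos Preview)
Your proposal is correct and matches the paper's own justification essentially verbatim: the paper also invokes Manacher's algorithm to precompute the radius of the maximal palindrome at each of the $2n-1$ centers, and then answers a query by comparing the length of $S[i..j]$ against the stored radius at its center. Your write-up is simply more explicit about the parity and indexing details, and the suffix-array alternative you mention is a valid extra route, but the primary argument is the same.
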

Manacher's algorithm~\cite{Manacher75} can compute all maximal palindromic substrings in linear time.
If we have the array of radiuses of maximal palindromic substrings for all $2n-1$ centers, 
we can check whether a given substring $S[i..j]$ is a palindromic substring or not in constant time.

\subsubsection{Range minimum queries (RmQ)}
Let $A$ be an integer array of size $n$.
A \emph{range minimum query} $\rmq_A(i, j)$ returns 
the index of a minimum element in the subarray $A[i, j]$
for given a query interval $[i, j] (1 \leq i \leq j \leq n)$,
i.e., it returns one of $\arg \min_{i \leq k \leq j}\{A[k]\}$.
It is well-known (see e.g.,~\cite{rmqspace}) that 
after an $O(n)$-time preprocessing over the input array $A$,
$\rmq_A(i, j)$ can be answered in $O(1)$ time 
for any query interval $[i, j]$, using $O(n)$ space.

\subsection{Algorithm}
Due to the arguments in Section~\ref{subsec:property},
if we can compute $\predmups$, the shortest $\mupss$ in $\M([s, t])$ and $\succmups$ for a query interval $[s, t]$,
then, we can compute $\supss$ for $[s, t]$.
Below, we will describe our solution to the $\sups$ problem.

\subsubsection{Preprocessing phase}
First, we compute $\M_S$ for a given string $S$ of length $n$ in increasing order of beginning positions.
We show, in the next section, that this can be done in $O(n)$ time and space.
After computing $\M_S$, we compute the arrays $\predend$ and $\succbeg$.
It is easy to see that we can also compute these arrays in $O(n)$ time by using $\M_S$.
In the query phase, we are required to compute the shortest $\mupss$ that  contain the query interval $[s, t]$.
To do so efficiently, we prepare the following array.
Let $\mupslen$ be an array of length $m = |\M_S|$,
and the $i$-th entry $\mupslen[i]$ holds the length of $\m_i$, i.e., $\mupslen[i] = |\m_i| = e_i-b_i+1$.
We also preprocess $\mupslen$ for $\rmq$ queries.
This can be done in $O(m)$ time and space as noted in Section~\ref{subsec:tools}.
Thus, since $m = O(n)$, the total  preprocessing is $O(n)$ time and space.

\subsubsection{Query phase}
First, we compute how many $\mupss$ are contained in a query interval $[s, t]$
by using Lemma~\ref{lem:catching_mups}, which we denote by $\mathit{num}$.
This can be done in $O(1)$ time given arrays $\predend$ and $\succbeg$.

\begin{itemize}
\item
  If $\mathit{num} = 0$,
  let $\m_i = \predmups([s, t])$ and $\m_j = \succmups([s, t])$, i.e.,
  $i = \predend[s]$ and $j = \succbeg[t]$.
  We check whether $S[b_i-t+e_i..t]$ and $S[s..e_j+b_j-s]$ are palindromic substrings or not.
  If so, then they are candidates of $\supss$ for $[s, t]$ by Lemma~\ref{lem:answers}.
  Let $q$ be the length of the shortest candidates which can be found in the above.
  Second, we compute the shortest $\mups$ in $\M([s, t])$,
  if their lengths are at most $q$.
  In other words, we compute the smallest values in $\mupslen[i+1..j-1]$, if they are at most $q$.
  We can compute all such $\mupss$ in linear time w.r.t. the number of such $\mupss$
  by using $\rmq$ queries on $\mupslen[i+1, j-1]$;
  if $k = \rmq_\mupslen(i+1,j-1)$ and $\mupslen[k] \leq q$, then we consider the range
  $\mupslen[i+1..k-1]$ and $\mupslen[k+1,j-1]$ and recurse. Otherwise, we stop the recursion.
  Finally, we return the shortest candidates as $\sups$.

\item If $\mathit{num} = 1$, let $\m_i$ be the $\mups$ contained in $[s, t]$.
  First, we check whether $S[b_i-z, e_i+z]$ is a palindromic substring or not
  by using Lemma~\ref{lem:palindromic_checking} where $z = \max \{ b_i-s, t-e_i \}$.
  If so, then return $[b_i-z, e_i+z]$,
  otherwise $\sups$ for $[s, t]$ does not exist.

\item If $\mathit{num} \geq 2$, 
  then, from Corollary~\ref{coro:no_answer}, $\sups$ for $[s, t]$ does not exist.  
\end{itemize}
Therefore, we obtain the following.

\begin{theorem}
After constructing an $O(n)$-space data structure of a given string of length $n$ in $O(n)$ time,
we can compute all $\supss$ for a given query interval $[s, t]$
in $O(\alpha+1)$ time where $\alpha$ is the number of outputs.
\end{theorem}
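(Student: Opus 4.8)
The plan is to assemble the theorem directly from the structural results already established, organizing the argument around the quantity $\mathit{num}$, the number of $\mupss$ contained in the query interval $[s,t]$, exactly as the query phase does. First I would justify the preprocessing claim: by the forthcoming Section~\ref{sec:mups} we compute $\M_S$ in $O(n)$ time and space; since $m=|\M_S|\le n$ by the remark following Lemma~\ref{lem:non_nesting}, scanning $\M_S$ once yields the arrays $\predend$ and $\succbeg$ in $O(n)$ time, and likewise the array $\mupslen$ with $\mupslen[i]=e_i-b_i+1$. Preprocessing $\mupslen$ for range-minimum queries costs $O(m)=O(n)$ time and space by the discussion in Section~\ref{subsec:tools}. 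So the total is $O(n)$ time and $O(n)$ space, and we additionally have $O(n)$-time, $O(1)$-query palindrome testing available from Lemma~\ref{lem:palindromic_checking}.

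Next I would verify the query phase. Using Lemma~\ref{lem:catching_mups}, comparing $\succbeg[s]$ and $\predend[t]$ tells us in $O(1)$ time whether $\mathit{num}$ is $0$, $1$, or $\ge 2$. If $\mathit{num}\ge 2$, Corollary~\ref{coro:no_answer} says there is no $\sups$, and we output nothing. If $\mathit{num}=1$ with the unique contained $\mups$ being $\m_i$, Lemma~\ref{contain_one_mups} says the only candidate is $S[b_i-z..e_i+z]$ with $z=\max\{b_i-s,\,t-e_i\}$, which we test for being a palindrome in $O(1)$ time and return if so; correctness and the $O(1)$ running time are immediate. The substantive case is $\mathit{num}=0$: here Lemma~\ref{lem:answers} enumerates the candidate set as the elements of $\M([s,t])$ (which is $\{\m_{i+1},\dots,\m_{j-1}\}$ when $\m_i=\predmups([s,t])$, $\m_j=\succmups([s,t])$), together with the two boundary candidates $S[b_i-t+e_i..t]$ and $S[s..e_j+b_j-s]$ when these are palindromic. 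The $\supss$ are precisely the shortest among all these candidates.

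The main obstacle — and the point I would argue most carefully — is that the $\mathit{num}=0$ branch runs in $O(\alpha+1)$ time, i.e., that we never touch more than $O(\alpha)$ entries of $\mupslen$. I would argue as follows. First compute the two boundary candidates and let $q$ be the minimum of their lengths (or $q=\infty$ if neither is palindromic); this is $O(1)$ work. Now we must report every $\m_k$ with $i<k<j$ and $\mupslen[k]\le q$, and also learn the overall minimum so we can decide whether the boundary candidates are themselves among the outputs. The recursive $\rmq$ scheme does exactly this: query $k=\rmq_\mupslen(i+1,j-1)$; if $\mupslen[k]>q$ the whole block is irrelevant and we stop having spent $O(1)$ time, so in that case the only possible outputs are the boundary candidates (at most two, both of length $q$), consistent with $O(\alpha+1)$. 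If $\mupslen[k]\le q$, then $\m_k$ is an output, so we may charge this $\rmq$ call to it, and we recurse on $\mupslen[i+1..k-1]$ and $\mupslen[k+1..j-1]$. Each internal node of the resulting recursion tree is a successful report charged to a distinct output $\mups$, and each such node spawns at most two children, at most one of which can be a ``stop'' leaf counted by the parent; hence the total number of $\rmq$ calls is $O(\alpha+1)$, each costing $O(1)$. Finally, after collecting the set $C$ of reported $\mupss$ together with the boundary candidates, one more $O(\alpha+1)$-time pass extracts those of globally minimum length; note that if no $\m_k$ in the block has length $\le q$ and the boundary candidates exist, they are the answer, while if some reported $\mups$ is strictly shorter than $q$, the boundary candidates are discarded — in all cases the output set has size $\alpha$ and is produced in $O(\alpha+1)$ time. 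Combining the three branches gives the claimed query bound, and together with the preprocessing analysis this proves the theorem. \qed
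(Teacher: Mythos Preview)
Your outline mirrors the paper's algorithm exactly: same preprocessing, same three-way split on $\mathit{num}$, same use of $\rmq$ over $\mupslen$. The only substantive addition is your explicit charging argument for the $\mathit{num}=0$ branch, and that is where there is a genuine gap.

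You set $q$ to the length of the shorter boundary candidate and then recurse with the test ``$\mupslen[k]\le q$'', asserting that whenever this test succeeds, ``$\m_k$ is an output, so we may charge this $\rmq$ call to it''. That assertion is false: a $\mups$ in $\M([s,t])$ of length $\le q$ is merely a \emph{candidate}; it is a $\sups$ only if its length equals the global minimum over all candidates. Concretely, if the two boundary candidates have length $q=100$ while $\M([s,t])$ contains one $\mups$ of length $5$ and a thousand of length $6$, then $\alpha=1$, yet your recursion touches all $1001$ entries and your set $C$ has size $1001$, so neither the recursion nor the ``final $O(\alpha+1)$ pass over $C$'' meets the stated bound. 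The paper's prose is admittedly terse here (it says ``compute the smallest values in $\mupslen[i+1..j-1]$'' and then ``recurse''), but the intended procedure---and the easy fix to your argument---is to spend the first $\rmq$ call to obtain the true minimum $\mu=\mupslen[\rmq_\mupslen(i+1,j-1)]$, replace the threshold $q$ by $\min(q,\mu)$, and only then recurse. With that adjustment every successful test satisfies $\mupslen[k]=\mu$, hence every hit really is an output, and your recursion-tree charging goes through verbatim to give $O(\alpha+1)$ calls.
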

\section{Computing $\mupss$} \label{sec:mups}
In this section, we show how to compute $\M_S$ in $O(n)$ time and space.
Let $\distinctpal_S$ be the set of \emph{distinct palindromic substrings} in $S$,
and $\mathit{strM}_S = \{ S[i, j] \mid [i, j] \in \M_S \}$.
Our idea of computing $\M_S$ is based on the following lemma.

\begin{lemma}
  $\mathit{strM}_S \subseteq \distinctpal_S$.
\end{lemma}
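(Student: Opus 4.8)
The plan is to show that every member of $\mathit{strM}_S$ is a palindromic substring of $S$ that occurs exactly once, and hence belongs to $\distinctpal_S$ (which, by the definition recalled just above the lemma, is the set of \emph{distinct} palindromic substrings, i.e.\ the set of palindromic substrings recorded once each). This is essentially immediate from unpacking the definition of $\mups$. First I would take an arbitrary interval $[i,j]\in\M_S$ and the corresponding string $w = S[i..j]\in\mathit{strM}_S$. By the definition of $\mups$, $w$ is a unique palindromic substring of $S$: in particular $w\in P$, so $w$ is a palindrome, and $|\occ_S(w)| = 1$, so $w$ occurs in $S$.

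Next I would argue that $w\in\distinctpal_S$. Since $\distinctpal_S$ is defined as the set of distinct palindromic substrings occurring in $S$, it suffices that $w$ is a palindrome and that $w$ occurs at least once in $S$; both follow from the previous paragraph. Hence $w\in\distinctpal_S$, and since $w$ was an arbitrary element of $\mathit{strM}_S$, we conclude $\mathit{strM}_S\subseteq\distinctpal_S$.

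I do not expect any real obstacle here: the lemma is a direct consequence of the definitions, and the only subtlety is being careful about what ``distinct palindromic substrings'' means as a set — namely that each distinct palindromic string is counted once, so membership only requires ``is a palindrome'' plus ``occurs in $S$,'' both of which are part of being a unique palindromic substring. The uniqueness condition in the definition of $\mups$ is actually stronger than needed for this particular containment, but it is of course what makes the reverse question (how large $\mathit{strM}_S$ can be, and how to compute it) interesting; presumably the subsequent development uses the bound $|\distinctpal_S|\le n+1$ from Droubay et al.\ together with this containment to bound $|\M_S|$ and to drive the linear-time construction of $\M_S$ by filtering the distinct palindromes computed by the algorithm of Groult et al.
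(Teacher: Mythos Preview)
Your proposal is correct and takes essentially the same approach as the paper: both simply observe that every element of $\mathit{strM}_S$ is by definition a palindromic substring of $S$, hence belongs to $\distinctpal_S$. The paper's proof is just the single sentence ``It is clear that any string in $\mathit{strM}_S$ is a palindromic substring of $S$,'' and your added remarks about uniqueness being stronger than needed and about the subsequent use of Groult et al.'s algorithm are accurate context (though the bound $|\M_S|\le n$ is actually obtained earlier via the non-nesting property rather than via $|\distinctpal_S|\le n+1$).
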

\begin{proof}
  It is clear that any string in $\mathit{strM}_S$ is a palindromic substring of $S$.
\end{proof}

An algorithm for computing all distinct palindromic substrings in string in linear time and space
was proposed by Groult et al.~\cite{DBLP:journals/ipl/GroultPR10}.
We show a linear time and space algorithm which computes $\M_S$
by modifying Groult et al.'s algorithm.

\subsection{Tools}
We show some tools for computing $\M_S$ below.

\begin{itemize}
\item {\bf Longest previous factor array (LPF)}
We denote the longest previous factor array of $S$ by $\LPF_S$.
The $i$-th entry $(1 \leq i \leq n)$ is the length of the longest prefix of $S[i..n]$
which occurs at a position less than $i$.

\item {\bf Inverse suffix array (ISA)}
We denote the inverse suffix array of $S$ by $\ISA_S$.
The $i$-th entry $(1 \leq i \leq n)$ is the lexicographic order of $S[i..n]$
in all suffixes of $S$.

\item {\bf Longest common prefix array (LCP)}
We denote the longest common prefix array of $S$ by $\LCPA_S$.
The $i$-th entry $(2 \leq i \leq n)$ is the length of the longest common prefix
of the lexicographically $i$-th suffix of $S$ and the $(i-1)$-th suffix of $S$.
\end{itemize}

\subsection{Computing distinct palindromes}
Here, we show a summary of Groult et al.'s algorithm.
The following lemma states the main idea.

\begin{lemma}[\cite{DBLP:journals/tcs/DroubayJP01}]
The number of distinct palindromic substrings in $S$
is equal to the number of prefixes of $S$
s.t. its longest palindromic suffix is unique in the prefix.
\end{lemma}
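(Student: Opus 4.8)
The plan is to prove the identity incrementally, scanning $S$ from left to right and tracking how $|\distinctpal_{S[1..k]}|$ evolves as $k$ grows from $0$ to $n$. The core claim I would establish is that passing from $S[1..k-1]$ to $S[1..k]$ increases this quantity by exactly one when the longest palindromic suffix $v$ of $S[1..k]$ does not occur in $S[1..k-1]$, and by zero otherwise. Granting this, the reduction is immediate: a palindromic suffix of $S[1..k]$ occurs exactly once in $S[1..k]$ if and only if it does not occur in $S[1..k-1]$ (every occurrence other than the one ending at position $k$ ends before $k$ and hence lies inside $S[1..k-1]$), so ``does not occur in $S[1..k-1]$'' coincides with ``unique in $S[1..k]$''; telescoping the increments from $k=1$ to $n$, with $\distinctpal_\varepsilon=\emptyset$, then yields that $|\distinctpal_S|$ equals the number of prefixes whose longest palindromic suffix is unique in that prefix.

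First I would note that any palindromic substring of $S[1..k]$ that does not occur in $S[1..k-1]$ must be a suffix of $S[1..k]$, since any occurrence ending at a position strictly less than $k$ already lies in $S[1..k-1]$. The key structural step is then that at most one such \emph{new} palindrome can appear at step $k$: if $w_1$ and $w_2$ were both new palindromic suffixes of $S[1..k]$ with $|w_1|<|w_2|$, then $w_1$ is a palindromic suffix of the palindrome $w_2$, so by Lemma~\ref{lem:rev_in_pal} $w_1$ is also a prefix of $w_2$; hence $w_1$ occurs starting at position $k-|w_2|+1$, and this occurrence ends at position $k-|w_2|+|w_1|\le k-1$, contradicting that $w_1$ is new. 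The same prefix/suffix reasoning shows that whenever a new palindrome exists at step $k$ it must be the longest palindromic suffix $v$: if $v$ itself occurred in $S[1..k-1]$, then every shorter palindromic suffix, being a substring of $v$, would occur in $S[1..k-1]$ as well, so no palindromic suffix — hence no palindromic substring — would be new at step $k$.

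Combining these observations, the number of distinct palindromic substrings grows by at most one per step, and it grows by exactly one precisely when $v$ is new, equivalently (by the bookkeeping above) precisely when $v$ is unique in $S[1..k]$. Summing these $0/1$ increments over the prefixes $S[1..1],\dots,S[1..n]$ gives the lemma.

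I expect the main obstacle to be the ``at most one new palindrome'' step, where one must juggle the two occurrences of the shorter palindrome $w_1$ — the suffix occurrence ending at $k$ and the prefix occurrence inside $w_2$ forced by Lemma~\ref{lem:rev_in_pal} — and check that the strict inequality $|w_1|<|w_2|$ genuinely forces the prefix occurrence to end before position $k$. Once this structural fact and its companion (newness forces the new palindrome to be the longest palindromic suffix) are in place, the remainder is a routine telescoping argument.
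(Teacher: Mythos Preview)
Your argument is correct and is precisely the classical incremental proof of Droubay, Justin, and Pirillo: at each new position at most one fresh palindrome appears, and when one does it is the longest palindromic suffix, which is fresh exactly when it is unique in the current prefix. The paper does not supply its own proof of this lemma---it simply cites \cite{DBLP:journals/tcs/DroubayJP01}---so there is nothing further to compare; your write-up recovers the cited argument cleanly, and the point you flag as the main obstacle (the strict inequality $|w_1|<|w_2|$ forcing the prefix occurrence of $w_1$ inside $w_2$ to end strictly before position $k$) is handled correctly.
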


Since counting suffixes that uniquely occur in a prefix implies that only
the leftmost occurrences of substrings, and thus distinct substrings are counted,
their algorithm finds all the distinct palindromic substrings by:
\begin{itemize}
\item computing the longest palindromic suffix of each prefix of $S$, and
\item checking whether each longest palindromic suffix occurs uniquely in the prefix or not.
\end{itemize}
They first propose an algorithm which computes all the longest palindromic suffixes in linear time.
They then check, in constant time, the uniqueness of the occurrence in the prefix by using the LPF array, thus
computing $\distinctpal_S$ in linear time and space.

\subsection{Computing all $\mupss$}
Finally, we show how to modify Groult et al.'s algorithm.
As mentioned, they compute the leftmost occurrence of each distinct palindromic substring.
We call such a palindromic substring, the leftmost palindromic substring.
It is clear that if a leftmost palindromic substring $w$ is unique in $S$
and is a minimal palindromic substring, then $w$ is a $\mups$.
Thus, we add operations to check the uniqueness and minimality of each leftmost palindromic substring.
We can do these operations by using $\ISA$ and $\LCPA$ array.

Let $S[i..j]$ be a leftmost palindromic substring in $S$.
First, we check whether $S[i..j]$ is unique or not in $S$.
If $\ISA[i] = k$, $S[i..n]$ is the lexicographically $k$-th suffix of $S$.
$S[i..j]$ is unique in $S$ iff $\LCPA[k] < j-i+1$ and $\LCPA[k+1] < j-i+1$.
Thus we can check whether $S[i..j]$ is unique or not in constant time.
Finally, we check whether $S[i..j]$ is a minimal palindromic substring or not.
By definition, $S[i..j]$ is minimal palindromic substring if $j-i+1 \leq 2$,
i.e., $S[i..j]$ has no shorter unique palindromic substring.
If $j-i+1 > 2$, then we check whether $S[i+1..j-1]$ is unique or not
by using $\ISA$ and $\LCPA$ in a similar way.
Thus we can also check whether $S[i..j]$ is minimal or not in constant time.
By the above arguments, we can compute all $\mupss$ in linear time and space.

\section{Conclusions} \label{sec:concl}
We consider a new problem called the shortest unique palindromic substring problem.
We proposed an optimal linear time preprocessing algorithm so that all $\supss$ for any given query interval
can be answered in linear time w.r.t. the number of outputs.
The key idea was to use palindromic properties in order to obtain a characterization of $\sups$,
more precisely,
that a palindromic substring cannot contain a unique palindromic substring with a different center.
%This palindromic property gave us some characterizations of $\sups$.

\bibliographystyle{abbrv}
\bibliography{ref}

\end{document}